\def\UseBibLatex{1}
\def\input@path{{styles/}}
\providecommand{\BibLatexMode}[1]{}
\providecommand{\BibTexMode}[1]{}
\renewcommand{\BibLatexMode}[1]{#1}
\renewcommand{\BibTexMode}[1]{}
  \renewcommand{\BibLatexMode}[1]{}
  \renewcommand{\BibTexMode}[1]{#1}
\theoremstyle{plain}%
\newtheorem{theorem}{Theorem}[section]
\newtheorem{lemma}[theorem]{Lemma}
\newtheorem{corollary}[theorem]{Corollary}
\theoremstyle{plain}%
\newtheorem*{remark:unnumbered}[theorem]{Remark}%
\newtheorem{remark}[theorem]{Remark}%
\newtheorem{defn}[theorem]{Definition}
\theoremstyle{nonumberplain}%
\newtheorem{proof}{Proof:}%
\providecommand{\emphind}[1]{}%
\renewcommand{\emphind}[1]{\emph{#1}\index{#1}}
\definecolor{blue25emph}{rgb}{0, 0, 11}
\providecommand{\emphic}[2]{}
\renewcommand{\emphic}[2]{\textcolor{blue25emph}{%
      \textbf{\emph{#1}}}\index{#2}}
\providecommand{\emphi}[1]{}%
\renewcommand{\emphi}[1]{\emphic{#1}{#1}}
\definecolor{almostblack}{rgb}{0, 0, 0.3}
\providecommand{\emphw}[1]{}%
\renewcommand{\emphw}[1]{{\textcolor{almostblack}{\emph{#1}}}}%
\providecommand{\emphOnly}[1]{}%
\renewcommand{\emphOnly}[1]{\emph{\textcolor{blue25emph}{\textbf{#1}}}}
\providecommand{\myqedsymbol}{\rule{2mm}{2mm}}
\newcommand{\HLink}[2]{\hyperref[#2]{#1~\ref*{#2}}}
\newcommand{\HLinkSuffix}[3]{\hyperref[#2]{#1\ref*{#2}{#3}}}
\newcommand{\figlab}[1]{\label{fig:#1}}
\newcommand{\figref}[1]{\HLink{Figure}{fig:#1}}
\newcommand{\thmlab}[1]{{\label{theo:#1}}}
\newcommand{\thmref}[1]{\HLink{Theorem}{theo:#1}}
\newcommand{\corlab}[1]{\label{cor:#1}}
\newcommand{\corref}[1]{\HLink{Corollary}{cor:#1}}%
\newcommand{\lemlab}[1]{\label{lemma:#1}}
\newcommand{\lemref}[1]{\HLink{Lemma}{lemma:#1}}%
\newcommand{\tbllab}[1]{\label{table:#1}}
\newcommand{\tblref}[1]{\HLink{Table}{table:#1}}
\newcommand{\seclab}[1]{\label{sec:#1}}
\newcommand{\secref}[1]{\HLink{Section}{sec:#1}}
\newcommand{\defrefY}[2]{\hyperref[def:#1]{#2}}
\providecommand{\eqlab}[1]{}%
\renewcommand{\eqlab}[1]{\label{equation:#1}}
\providecommand{\remove}[1]{}%
\newcommand{\Set}[2]{\left\{ #1 \;\middle\vert\; #2 \right\}}
\newcommand{\pth}[1]{\mleft(#1\mright)}%
\newcommand{\ceil}[1]{\mleft\lceil {#1} \mright\rceil}
\newcommand{\floor}[1]{\mleft\lfloor {#1} \mright\rfloor}
\newcommand{\brc}[1]{\left\{ {#1} \right\}}
\newcommand{\cardin}[1]{\left\lvert {#1} \right\rvert}%
\newlist{compactenumA}{enumerate}{5}%
\setlist[compactenumA]{itemsep=-0.5ex,topsep=0.5ex,partopsep=1ex,parsep=1ex,%
   label=(\Alph*)}%
\newlist{compactenuma}{enumerate}{5}%
\setlist[compactenuma]{itemsep=-0.5ex,topsep=0.5ex,partopsep=1ex,parsep=1ex,%
   label=(\alph*)}%
\newlist{compactenumI}{enumerate}{5}%
\setlist[compactenumI]{itemsep=-0.5ex,topsep=0.5ex,partopsep=1ex,parsep=1ex,%
   label=(\Roman*)}%
\newlist{compactenumi}{enumerate}{5}%
\setlist[compactenumi]{itemsep=-0.5ex,topsep=0.5ex,partopsep=1ex,parsep=1ex,%
   label=(\roman*)}%
\newlist{compactitem}{itemize}{5}%
\setlist[compactitem]{itemsep=-0.5ex,topsep=0.5ex,partopsep=1ex,parsep=1ex,%
   label=\ensuremath{\bullet}}%
\numberwithin{figure}{section}%
\numberwithin{table}{section}%
\numberwithin{equation}{section}%
\def\DD{{{\bf \delta}}}
\def\CH{{\mathop{\mathrm{ConvexHull}}}}
\newcommand{\inter}{\mathop{\mathrm{int}}}
\def\findEdgeDirection{{{\sc{Find\-Edge\-Direction}}}}
\def\gcdGeometric{{{{\sc{Geom\-GCD}}}}}
\def\DCH{{\mathop{\mathrm{Discrete{}Hull}}}}
\newcommand{\LL}{{\cal {L}}}
\newcommand{\G}{{\hat G}}
\newcommand{\T}{{\cal T}}
\newcommand{\ZZ}{\mathbb{Z}}
\def\bd{{\partial}}
\newcommand{\triangleDCHSize}{2 \ceil{\log_\phi {\DD(T)} } + 7}
\newcommand{\Bray}{\xi}%
\title{An Output Sensitive Algorithm for Discrete Convex Hulls%
   \thanks{%
      This work has been supported by a grant from the U.S.--Israeli Binational Science Foundation.  This work is part of the author's Ph.D. thesis, prepared at Tel-Aviv University under the supervision of Prof. Micha Sharir. It appeared in SoCG 98 \cite{h-osafd-98} and CGTA \cite{h-osafd-98a}.%
   }%
}
\author{%
   Sariel Har-Peled%
   \thanks{%
      School of Mathematical Sciences, Tel Aviv University, Tel Aviv 69978, Israel; {\tt sariel@math.tau.ac.il}.}%
}
\date{September 22, 1997%
   \footnote{Re\LaTeX{}ed on \today.}}
\begin{document}
\maketitle

\begin{abstract}
    Given a convex body $C$ in the plane, its discrete hull is $C^0 = \CH( C \cap \LL )$, where $\LL = \ZZ \times \ZZ$ is the integer lattice. We present an $O( |C^0| \log \DD(C) )$-time algorithm for calculating the discrete hull of $C$, where $|C^0|$ denotes the number of vertices of $C^0$, and $\DD(C)$ is the diameter of $C$.  Actually, using known combinatorial bounds, the running time of the algorithm is $O(\DD(C)^{2/3} \log{\DD(C)})$. In particular, this bound applies when $C$ is a disk.
\end{abstract}

\section[Introduction]{Introduction}

\begin{figure}[t]
    \centerline{ \includegraphics{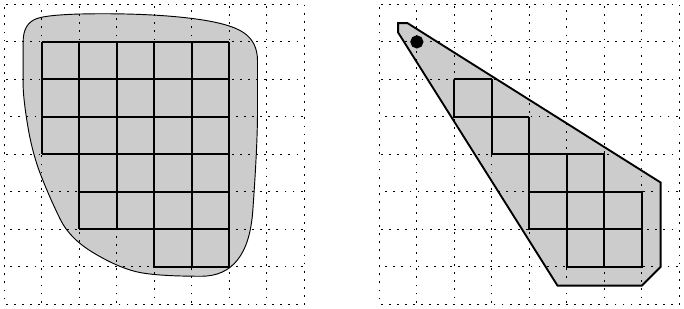} }
    \caption{On the left, a convex body $C$ such that $C \cap \LL$ is
       lattice-connected; on the right, a convex body $C$ such that
       $C \cap \LL$ is not lattice-connected}
    \figlab{grid:connected}
\end{figure}

\begin{figure}
    \centerline{ \includegraphics{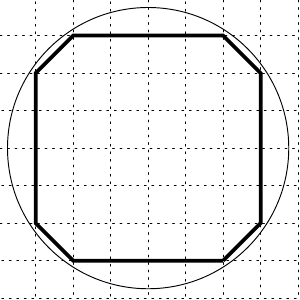} }
    \caption{A disk and its corresponding discrete convex hull}
    \figlab{discrete:hull}
\end{figure}

Let $C$ be a planar convex body which we assume to be {\em
   sufficiently round}, in the sense that the following condition
holds: Let $\LL = \ZZ \times \ZZ$ denote the planar integer lattice.
We require that $C\cap \LL$ is {\em lattice-connected}; that is, the
union of all the horizontal and vertical unit line segments that
connect between pairs of points in $C\cap \LL$ is connected; see
\figref{grid:connected}. The {\em discrete hull} $C^0$ of $C$ is
defined as the convex hull of $C \cap \LL$; see \figref{discrete:hull}
for an illustration.

The discrete hull arises in several applications. For example, in
computer graphics, $C^0$ is a natural polygonal representation of the
convex shape $C$. See \cite{BV92,EG94,KV86,Thie91} for related work.

The discrete hull has the useful property that the number of its
vertices is relatively small; more precisely, this number is
$O(\DD(C)^{2/3})$, where $\DD(C)$ is the diameter of $C$.  This is a
special case of a more general bound, established in \cite{And63}, for
convex lattice polytopes in arbitrary dimension.  For the sake of
completeness, we provide a simple proof for the planar case (see
\lemref{dch_complexity} below).  A similar proof has also been given
in \cite{KV86}.  This property makes the discrete hull an attractive
and useful tool for approximating convex shapes (see
\cite{hsv-aspcp-96,ahsv-aspcp-97} for an application of discrete hulls
in three dimensions for computing approximate shortest paths on convex
polytopes).

Solving a problem raised by Pankaj Agarwal, we present in this paper
an output-sensitive algorithm for calculating the discrete hull $C^0$
of a sufficiently round convex body $C$. The algorithm runs in time
$O( |C^0| \log \DD(C) )$, where $|C^0|$ is the number of vertices of
$C^0$.  By the above results, this bound is also
$O(\DD(C)^{2/3} \log \DD(C) )$.

Suppose we associate a value with each point of $G$, where $G$ is all
the points of the lattice inside a square of size $N\times N$. Katz
and Volper \cite{KV86} present a data-structure for an intermixed
sequence of updates and retrievals of the sum of values of the points
of $G$ within a disk, such that a retrieval or update operation takes
$O(N^{2/3}\log^3 N)$ time. (An update operation modifies the value
associated with a vertex of $G$) The first stage in answering a query,
in the algorithm of \cite{KV86}, is the computation of the discrete
hull of the query disk. Katz and Volper do not present an efficient
algorithm for computing the discrete hull of a disk, and their bound
holds when counting only the semi-group operations (see \cite{KV86}).
By our main result, their algorithm can be extended such that their
bound holds in the regular RAM model as well.

Our algorithm employs a variant of the continued-fraction technique
used to approximate a real number by rationals with small
denominators.  We first review the machinery needed for this algorithm
in \secref{rational}, and give a geometric interpretation of it that
we will later exploit. The connection between continued fractions and
the discrete hull of a half-plane was already observed by Klein in
1895 (see \cite{Dav83}). Furthermore,
\cite{GY86,lc-avcpg-92,ks-bcmlp-96} use continued fractions to
calculate convex polygonal chains that are discrete hulls of some
specific shapes. We extend their techniques to handle the more general
case of an arbitrary convex body.

Our discrete hull algorithm is presented and analyzed in
\secref{algorithm}. First, in \secref{bounds}, we derive several
combinatorial bounds on the complexity of the discrete hull. Next, we
describe, in \secref{algorithm:2}, our algorithm in detail, and present
several variants of the algorithm.

We have implemented this algorithm for the special case of a disk.
The experimental results are reported in \secref{experimental}.

\section[The Geometric Interpretation of Continued Fractions] {The
   Geometric Interpretation of Continued\\ Fractions}

\seclab{rational}

In this section we review the machinery developed for the computation
of continued fractions and discuss its geometric interpretation. For a
review of continued fractions see \cite{Dav83}.

Given a positive real number $r$, it has a unique representation as a
simple continued fractions:
\[
    r = q_0 + \frac{1}{q_1 + \frac{1}{q_2 + \frac{1}{q_3 + \cdots}}} =
    q_0 + \frac{1}{q_1+} \frac{1}{q_2+} \frac{1}{q_3+} \cdots,
\]
where $q_i > 0$, for $i > 0$, $q_0 \geq 0$, and $q_0, q_1, \ldots$ are
integers. If $r = a/b$ is a rational (in reduced form), the continued
fraction representation of $r$ is finite, namely
$r = q_0 + \frac{1}{q_1+} \cdots \frac{1}{q_{n}}$. Consider in what
follows such an $r$. Moreover, the following recurrence holds
\[
    q_i = \floor{x_i},\quad
    x_{i+1} = \frac{1}{x_i - q_i},
    \quad\text{for}\quad i=0, \ldots, n,
\]
where $x_0 = r$. In particular, the terms are byproduct of the
calculation of the $gcd(a,b)$, and thus can be calculated in
$O(\log{ \min( a, b ) })$ time (See \cite{clr-ia-90}).

Given a reduced fraction $r' = a/b$, we define $G(r' ) = (b, a)$, a
point of the integer lattice corresponding to $r'$.

The intermediate continued fractions, $r_i = q_0 + \frac{1}{q_1+} \cdots \frac{1}{q_{i}}$, for $i=0, \ldots, n$, are called the {\em convergents} of $r$. Let $\frac{a_i}{b_i}$ be the reduced form of $r_i$, for $i=0, \ldots, n$.  The convergent $r_i$ is an optimal approximation to $r$, in the sense that if $\frac{c}{d}$ is between $r$ and $r_i$ and $\cardin{ \frac{c}{d} -r } < \cardin{ \frac{a_i}{b_i} - r }$ then $ d > b_i$, for $i = 1, \ldots, n$. See \cite{Dav83}.

Let $p = G(r) = (b,a)$ be the point corresponding to $r$, and let $p_i = G( r_i )= (b_i, a_i)$ be the planar grid point corresponding to $r_i$, for $i=0, \ldots, n$.  We then have $p_i = q_i p_{i-1} + p_{i-2}$, for $i=0, \ldots, n$, where $p_{-2} =(1, 0)$ and $p_{-1} = (0, 1)$.  Furthermore, the only points of $\LL$ in $\triangle{op_{i}p_{i+1}}$ are its vertices, for $i=0, \ldots, n-1$ (see \cite{Dav83}). We refer to $p_{0}, \ldots, p_n$ as the (geometric) convergents of $p$. All this is illustrated in \tblref{gcd}.

\begin{table}
    \centering%
    % flatex input: [figs/tbl_gcd.tex]
\centering
\begin{tabular}{||r|r||l|l||r||}
\hline
~ & ~ & ~ &
Representation  &
 The \hspace{1.5cm}
~\\
{$a$}  &
 {$b$}  & {$\floor{{a}/{b}}$} &
{of $ ( 14, 31 ) $} &
{ convergent $ r_i $ } %
\\\hline
$ 31 $  &  $ 14 $  &  $ 2 $ & $ 31 (0, 1) + 14 (1, 0) $ &
        $2\bigr.$
\\ %
$ 14 $  &  $ 3 $  &  $ 4 $ & $ 14 (1, 2) + 3 (0, 1) $ &
        $2 + \frac{1}{4\bigr.}$
\\ %
$ 3 $  &  $ 2 $  &  $ 1 $ & $ 3 (4, 9) + 2 (1, 2) $ &
        $2 + \frac{1}{4 + \frac{1}{1\bigr.}}$
\\ %
$ 2 $  &  $ 1 $  &  $ 2 $ & $ 2 (5, 11) + 1 (4, 9) $ &
        $2 + \frac{1}{4 + \frac{1}{1 + \frac{1}{2\bigr.}}}$
\\ %
\hline
\end{tabular}

    \caption{The convergentsrgents of the number
   $\frac{31}{14}$ can be obtained by calculating $gcd(31,14)$ by
   Euclid algorithm.}
\tbllab{gcd}
\end{table}

Given two vectors $u, u' \in \LL$, where the coordinates of $u'$ are
relatively prime, let $r = ray( u, u')$ be the ray emanating from $u$
in the direction of $u'$. Let $p( r, i) = u + i u'$ be the $i$-th
lattice point of the ray, for $i \geq 0$. We denote by $r[i]$ the
closed segment on $r$ between $p(r,i)$ and $p(r, i+1)$, for
$i \geq 0$. We denote by $\Bray( u, u')$ the {\em discrete ray}
emanating from $u$ in the direction $u'$:
\[
    \Bray( u, u' ) = \Set{ u + i u' }{ i \in \ZZ, i > 0 }.
\]

The execution of the $gcd$ algorithm can be interpreted as a geometric algorithm in the following manner: The intermediate numbers $a_i,b_i$ in the $i$-th stage of the $gcd$ algorithm are the coefficients of $p = G(r)$ in its representation in the base $\brc{ p_{i-1}, p_{i-2} }$.  Each iteration refines the base by replacing its shorter vector by a longer one (the new convergent), such that $p$ remains `in the middle' (i.e., $p$ remains a positive combination of the two base vectors).  This process continues until the last convergent is equal to $p$.

\begin{figure}[tbp]
    \begin{center}
        \begin{minipage}[c]{5.0in}

            \noindent {\large{\bf{Algorithm:}}}\ \ \ {\gcdGeometric}(
            $r$ )

            \noindent \ \ {\tt Input:} A rational number $r$.\

            \noindent \ \ {\tt Output:} The convergents of $r$:
            $p_0, \ldots, p_n$.

            \begin{itemize}
                \item[(I)] Set
                $p_{-2} \leftarrow (1, 0), p_{-1} \leftarrow (0, 1)$,
                $i \leftarrow 0$.

                Let $l$ be the line $y = r x$.

                \item[(II)] Let $\Bray_i = \Bray( p_{i-2}, p_{i-1})$. If
                $\Bray_i \cap l \ne \emptyset$ then let $q_i$ be the
                index of the point lying on $l$, namely
                $p_i = q_i p_{i-1} + p_{i-2} \in l$. Clearly, $p_i$ is
                the last convergent of $r$ and we are done.

                Otherwise, let $q_i$ be the largest integer such that
                $p_i = q_i p_{i-1} + p_{i-2}$ and $p_{i-1}$ lie on
                different sides of $l$.

                \item[(III)] $i \leftarrow i + 1$. Goto step (II).
            \end{itemize}

        \end{minipage}
        \begin{minipage}[c]{6.0in}
            \caption{The algorithm for computing the discrete-hull
               simulates the above algorithm, that calculates the
               convergents of a rational number in a geometric
               manner.}%
            \figlab{geometric:gcd}
        \end{minipage}
    \end{center}

\end{figure}

Thus, calculating the convergents of $r$ can be done in a geometric
setting. The algorithm {\gcdGeometric} that does so is presented in
\figref{geometric:gcd}, and an example of its performance is shown in
\figref{gcd:run}. Note that in the algorithm, the division operation
of the $gcd$ is replaced by a discrete ray-shooting.

If $r > 0$ is irrational, the odd convergents $p_{-1}, p_1, \ldots$
are the vertices of the discrete hull of the region that lies above
the line $l$ in the positive quadrant of the plane, where $l$ is the
line $y = r x$. Similarly, the even convergents $p_{-2}, p_0, \ldots$
are the vertices of the discrete hull of the region that lies below
$l$ in the $x$ positive half-plane. Those two polygonal chains lie in
the half-planes defined by $l$, and they intersect $l$ only if $r$ is
rational (see \cite{Dav83}).

\begin{figure}
    \centering
    \begin{minipage}[c]{3.6in}
        \vspace{-8cm}
        \begin{tabular}{||r|r||l|l||l||}
          \hline
          ~
          & ~
          & ~
          &
            Current
          &
            The  %
                                                           ~\\
          ~
          &
            ~
          &
            ~
          &
            representation & convergent\\
          {$a$}
          &
            {$b$}
          & {$\floor{{a}/{b}}$}
          &
            {of $ ( 5, 8 ) $ } & {$ r_i $ }
          \\\hline
          $ 8 $
            &  $ 5 $
                &  $ 1 $

          &
            $ 8 (0, 1) + 5 (1, 0) $
       &                                                                $1\bigr.$
          \\ %
          $ 5 $  &  $ 3 $  &  $ 1 $ & $ 5 (1, 1) + 3 (0, 1) $ &
                                                                $1 + \frac{1}{1\bigr.}$
          \\ %
          $ 3 $  &  $ 2 $  &  $ 1 $ & $ 3 (1, 2) + 2 (1, 1) $ &
                                                                $1 + \frac{1}{1 + \frac{1}{1\bigr.}}$
          \\ %
          $ 2 $  &  $ 1 $  &  $ 2 $ & $ 2 (2, 3) + 1 (1, 2) $ &
                                                                $1 + \frac{1}{1 + \frac{1}{1 + \frac{1}{2\bigr.}}}$
          \\ %
          \hline
        \end{tabular}
    \end{minipage}
    \includegraphics{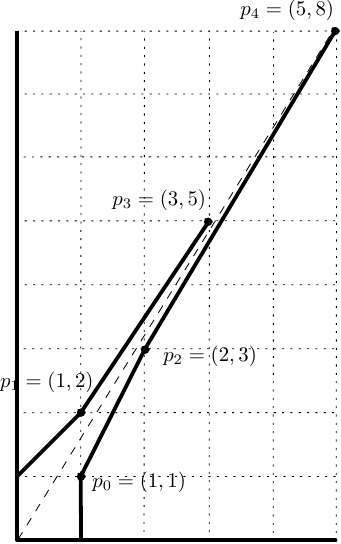}

    \caption{The convergents
       generated by \gcdGeometric( $8/5$ )}
    \figlab{gcd:run}
\end{figure}

\section[The Discrete Hull]{The Discrete Hull}

\seclab{algorithm}

In this section, we introduce the notion of discrete hull and provide
an algorithm for its computation. In \secref{bounds}, we prove several
bounds on the complexity of the discrete hull. In \secref{algorithm:2},
we describe algorithm for computing the discrete hull.

\subsection[Combinatorial Bounds on the Complexity of the Discrete Hull]
{Combinatorial Bounds on the Complexity of the Discrete Hull}
\seclab{bounds}

Given a convex body $C$ in the plane, we define the {\em discrete
   hull} $C^0$ of $C$ as the convex hull of $C \cap \LL$, where
$\LL = \ZZ \times \ZZ$ is the integer lattice.  The following lemma is
well known \cite{And63,BB91,KV86,Thie91}. For the sake of completeness
we provide here a simple proof. A similar proof has been given in
\cite{KV86}.

\begin{lemma}
    \lemlab{dch_complexity}%
    Let $C$ be a convex polygon in the plane, all of whose vertices belong to $\LL$, and let $D$ be its diameter. Then $C$ has $O( D^{2/3} )$ vertices.
\end{lemma}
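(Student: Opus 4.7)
The plan is to split $\bd C$ into at most four convex chains according to the quadrant in which the edge vectors point, and to show that each such chain has only $O(D^{2/3})$ edges. Fix one such chain $\gamma$, say the one whose edge vectors lie in the open first quadrant. Write the $i$\th edge of $\gamma$ as $m_i (a_i, b_i)$, where $m_i \ge 1$ is an integer and $(a_i, b_i) \in \LL$ is primitive (that is, $\gcd(a_i, b_i) = 1$). Since $C$ is convex, the slopes of the edges of $\gamma$ are pairwise distinct, and hence the primitive direction vectors $(a_i, b_i)$ are $n$ \emph{distinct} lattice points lying in the open positive quadrant.

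Next, I would use the diameter bound. The horizontal and the vertical spans of $\gamma$ are each at most $D$, so
\[
\sum_i (a_i + b_i) \;\le\; \sum_i m_i (a_i + b_i) \;\le\; 2D.
\]
The remaining step is a counting argument: among $n$ distinct lattice points in the open positive quadrant, the sum $\sum_i (a_i + b_i)$ must be $\Omega(n^{3/2})$. This is because the number of lattice points $(a, b)$ with $a, b \ge 1$ and $a + b \le K$ is $\binom{K-1}{2} = O(K^2)$. Sorting the $(a_i, b_i)$ in nondecreasing order of $a_i + b_i$, the $i$\th one therefore satisfies $a_i + b_i = \Omega(\sqrt{i})$, and summing gives $\sum_{i=1}^{n} (a_i + b_i) = \Omega(n^{3/2})$. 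Combining the two inequalities yields $n^{3/2} = O(D)$, so $n = O(D^{2/3})$, and summing over the (at most) four chains gives $|C| = O(D^{2/3})$ vertices.

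The main obstacle is really just packaging the little lattice-counting estimate cleanly; the convexity argument for distinct primitive directions and the diameter-based bound on $\sum m_i(a_i+b_i)$ are routine, and the lower bound on $\sum(a_i+b_i)$ requires only the elementary observation that a triangular region of perimeter $K$ in the positive quadrant contains $\Theta(K^2)$ lattice points, so one cannot pack more than $\Theta(K^2)$ distinct primitive directions with $a+b \le K$. No estimates involving the Euler totient function are needed.
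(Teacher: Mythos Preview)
Your argument is correct and is essentially the same as the paper's: both exploit that the edge vectors are (nearly) distinct lattice vectors whose total ``size'' is $O(D)$, together with the fact that only $O(K^2)$ lattice vectors have size at most $K$. The paper packages this via a long/short threshold on Euclidean length and the perimeter bound $P(C)\le 4D$, whereas you split into quadrant chains, use the $\ell_1$-size $a_i+b_i$, and replace the threshold trick by the equivalent sorting/summation estimate $\sum_i (a_i+b_i)=\Omega(n^{3/2})$; these are cosmetic differences in the same proof.
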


\begin{proof}
    The polygon $C$ is contained in an axis-parallel square of size
    $D$, thus the perimeter $P(C)$ of $C$ is at most $4D$.

    Let $p_1, \ldots, p_m$ be the vertices of $C$ in their
    counterclockwise order around $C$.

    Let $V = \brc{ v_1, \ldots, v_m } $ be the set of vectors
    connecting consecutive vertices on $\bd{C}$; that is,
    $v_i = p_{i+1} - p_i$, for $i=1,\ldots,m-1$, and
    $v_m = p_1 - p_m$.

    Let $l$ be a parameter to be specified shortly.  By definition,
    $P(C) = \sum_{v \in V} |v|$. Thus, there are at most $4D/l$
    vectors in $V$ of length larger than or equal to $l$.

    Clearly, $V$ cannot contain three collinear vectors, for otherwise
    one of the points $p_i$ would have to lie in the relative interior
    of an edge of $\bd{C}$, so it cannot be a vertex of $C$. This
    argument also implies that any pair of collinear vectors in $V$
    must have opposite orientations.  Thus the number of vectors in
    $V$ of length smaller than $l$ is at most the number of lattice
    points at distance smaller than $l$ from the origin, which in turn
    is at most $4 l^2$.

    Thus the number of vertices of $C$ is upper bounded by
    $\frac{4D}{l} + 4l^2$. Set $l = (D/2)^{1/3}$, the bound becomes
    $6 \cdot \sqrt[3]{2} D^{2/3}$ and the lemma follows.
\end{proof}

For the a case of disk, we have the following extension of \lemref{dch_complexity}.

\begin{lemma}
    Let $D$ be a disk of radius $r > 1$ in the plane. Then
    $|\LL \cap \bd{\DCH(D)}| = O(r^{2/3})$.

    \lemlab{dhs:complexity}
\end{lemma}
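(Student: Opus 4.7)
The plan is to decompose $\cardin{\LL\cap\bd\DCH(D)}$ edge by edge. Writing the edge vector of the $j$-th edge of $\DCH(D)$ as $v_j = g_j\,v'_j$, with $v'_j \in \LL$ primitive and $g_j\in\ZZ_{>0}$, that edge carries exactly $g_j+1$ lattice points; since each vertex is shared between two consecutive edges,
\[
    \cardin{\LL\cap\bd\DCH(D)} = \sum_j g_j.
\]
Applying \lemref{dch_complexity} to $\DCH(D)$, a convex lattice polygon of diameter at most $2r$, bounds the number of edges by $m = O(r^{2/3})$, so it suffices to prove $\sum_j g_j = O(r^{2/3})$.

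The main step is a per-edge estimate $g_j \le 1 + C\sqrt{r/\cardin{v'_j}^3}$ for an absolute constant $C$. Fix an edge whose supporting chord has length $L_j = g_j\cardin{v'_j}$ and lies on a lattice line $\ell_0$ in direction $v'_j$; the circular cap bounded between the chord and the corresponding arc of $\bd D$ must contain no lattice point in its interior. Since $v'_j$ is primitive, the lattice lines parallel to $v'_j$ are spaced $1/\cardin{v'_j}$ apart perpendicularly, and consecutive lattice points along each of them are spaced $\cardin{v'_j}$ apart. A direct computation from $y=\sqrt{r^2-x^2}$ shows the intersection of the cap with the next parallel line $\ell_1$ (at distance $1/\cardin{v'_j}$ from $\ell_0$) has length at most $\sqrt{L_j^2 - 8r/\cardin{v'_j}}$, or is empty; for no lattice point to lie on this segment the length must be less than the spacing $\cardin{v'_j}$, which rearranges to $g_j^2 \le 1 + 8r/\cardin{v'_j}^3$. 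The lines $\ell_k$ for $k\ge 2$ give strictly weaker constraints, so $\ell_1$ is binding. I expect the main technical care to be in handling the exact arc geometry (rather than a parabolic approximation), but this should only affect constants.

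To sum the estimate I split into two regimes. Edges with $\cardin{v'_j}\ge(8r)^{1/3}$ satisfy $g_j \le \sqrt{2}$, hence $g_j = 1$, contributing at most $m = O(r^{2/3})$. For the remaining edges I group dyadically by $\cardin{v'_j}\in[2^{k-1},2^k)$; a convex polygon's edges have pairwise distinct directions, and there are $O(\rho^2)$ primitive lattice vectors of length at most $\rho$, so each band contains $O(2^{2k})$ edges, each contributing at most $O\!\pth{\sqrt{r}\,2^{-3k/2}}$. Hence
\[
    \sum_{k:\; 2^k \le (8r)^{1/3}} O(2^{2k}) \cdot O\!\pth{\sqrt{r}\,2^{-3k/2}}
    = \sum_{k:\; 2^k \le (8r)^{1/3}} O\!\pth{\sqrt{r}\,2^{k/2}}
    = O\!\pth{\sqrt{r}\cdot r^{1/6}}
    = O(r^{2/3}),
\]
the geometric sum being dominated by its largest term. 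Combining the two regimes yields $\cardin{\LL\cap\bd\DCH(D)} = O(r^{2/3})$, as claimed.
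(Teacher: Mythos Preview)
Your argument is correct and follows essentially the same route as the paper's: both reduce to bounding $\sum_j g_j$, both derive the per-edge estimate $g_j = O(\sqrt{r}/|v'_j|^{3/2})$ from the fact that the lattice line parallel to the edge at perpendicular distance $1/|v'_j|$ must meet the cap in a segment shorter than $|v'_j|$, and both finish by summing over the possible primitive directions using the $O(r^{2/3})$ edge count from \lemref{dch_complexity}.

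The only differences are cosmetic. The paper makes the cap geometry precise by passing to an auxiliary disk $D'$ of radius $r$ through the edge endpoints (so the chord is exactly the edge) and obtains the clean inequality $w\le 4\sqrt{r}/c^{3/2}$; you instead gesture at a parabolic approximation and correctly flag that only constants are at stake. For the summation, the paper groups by integer values of $c=|v'_j|$ (using that an annulus of width $2$ contains $O(c)$ lattice vectors) and evaluates $\sqrt{r}\sum_{c\le r^{1/3}} c^{-1/2}$ directly, while you group dyadically; the two sums are the same up to constants. One small wording issue: ``for no lattice point to lie on this segment the length must be less than $|v'_j|$'' should really be stated as ``since no lattice point lies on it, its length is less than $|v'_j|$'' --- a segment shorter than $|v'_j|$ can still contain a lattice point, but one of length $\ge |v'_j|$ must. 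The logical direction you actually use is the correct one.
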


\begin{proof}
    Let $D^0 = \DCH(D)$.  For an edge $e = \overrightarrow{pq}$ of
    $D^0$, we define its {\em weight} to be $w(e) = |\LL \cap e| - 1$.
    We assume that $e$ is oriented so that $\inter{D^0}$ lies to the
    left of $e$.  Let $v=v(e)$ be the direction vector of $e$; namely,
    the shortest lattice vector which is parallel to $e$ and points in
    the same direction. Let $c=||v||$. Clearly, $w(e) = ||e||/c$.

    Let $l$ be the line passing through $e$, and let $H^-$ denote the
    half-plane bounded by $l$ that does not contain the center of $D$.
    Let $D'$ denote the disk of radius $r$ whose boundary passes
    through $p$ and $q$, and whose center lies in the same side of $l$
    as the center of $D$.

    Clearly, $H^- \cap D' \subseteq H^- \cap D$. Let $s$ denote the
    point on $H^- \cap \bd{D'}$ with a tangent to $D'$ that is
    parallel to $e$. Let $h$ be the distance between $s$ and $e$. See
    \figref{circle}.

    \begin{figure}
        \centerline{ \includegraphics{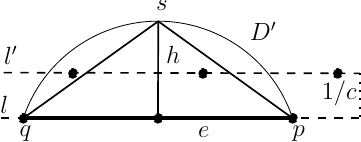} }
        \caption{Illustrating the proof that $h \leq 2/c$ when
           $w(e) > 1$.  }
        \figlab{circle}
    \end{figure}

    We claim that if $w = w(e) > 1$ then $h \leq 2/c$. Indeed, let
    $l'$ be the translation of $l$ by distance $1/c$ into $H^-$. The
    line $l'$ contains points of the lattice $\LL$, and the distance
    between two consecutive lattice points on $l'$ is $c$ (See
    \cite[Chap.  3]{hw-tn-65}). Thus, if $h \geq 2/c$ then
    $||l' \cap D|| \geq ||l' \cap D'|| \geq c$. This however imply
    that there must be a point of $\LL$ inside $H^- \cap D'$,
    contradicting the fact that $e$ is an edge of $D^0$.

    An easy calculation shows that $h=r- \sqrt{r^2 - w^2c^2/4}$. Thus,
    if $w > 1$ then $r- \sqrt{r^2 - w^2c^2/4} \leq 2/c$, implying that
    \[
        \frac{w^2c^2}{2r} \leq \frac{w^2c^2}{r+ \sqrt{r^2 -
              \frac{w^2c^2}{4}} } \leq \frac{8}{c}.
    \]
    Thus $w(e) = w \leq \frac{4\sqrt{r}}{c^{3/2}}$. This implies that,
    for $w \geq 2$, the maximum possible value of $c$ is
    $c \leq {\ceil{\sqrt[3]{4r}}}$.

    The number of lattice vectors of length between $c$ and $c+2$ is $O(c)$, and, by \lemref{dch_complexity}, the number of edges of $D^0$ is $O(r^{2/3})$. Hence, the number of lattice points lying on the edges of $D^0$ is bounded by the number of edges of weight one, plus the number of lattice points lying on edges of weight larger than 1; that is, this number is
    \begin{equation*}
        O\pth{r^{2/3}} + \sum_{c=1}^{\ceil{\sqrt[3]{4r}}} O(c)\cdot
        \frac{4\sqrt{r}}{c^{3/2}}
        =%
        O\bigl( r^{2/3}\bigr)
        +
        O \Bigl( \sqrt{r}
        \cdot \sum_{c=1}^{\ceil{\sqrt[3]{4r}}} \frac{1}{\sqrt{c}}
        \Bigr)
        = O
        \pth{ r^{2/3}}.
    \end{equation*}
\end{proof}

\begin{remark}
    \lemref{dhs:complexity} was proved independently by B{\'a}r{\'a}ny
    \cite{bar-pc-96} (but was not published).  Furthermore,
    B{\'a}r{\'a}ny and Larman \cite{bl-chip-96} showed that the number
    of $k$-dimensional faces of the discrete hull of the ball of
    radius $r$ centered at the origin (in dimension $d$) is
    $\Theta ( r^{\frac {d(d-1)}{d+1}} )$, for $k=0,\ldots,d-1$.  All
    these results were obtained recently and independently after the
    original preparation of this paper.
\end{remark}

\begin{remark}
    For a compact convex and smooth shape $C$ in the plane, such that
    the curvature of $C$ is bounded from below by $c >0$, the number
    points $|\LL \cap \bd{\DCH(C)}|$ is $O((1/c)^{2/3})$. This follows
    by a straightforward extension of the proof of
    \lemref{dhs:complexity}.
\end{remark}

\begin{lemma}
    Given a triangle $T$, such that two of its vertices belong to
    $\LL$, the number of vertices of the discrete hull of $T$ is at
    most $\triangleDCHSize$, where $\phi = \frac{\sqrt{5} + 1}{2}$ is
    the Golden Ratio.

    \lemlab{idiotic:above}
\end{lemma}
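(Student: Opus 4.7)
The plan is to exploit the continued-fraction machinery developed in \secref{rational}. Let $A, B \in \LL$ denote the two lattice vertices of $T$, let $C$ denote its third (possibly non-lattice) vertex, and set $D = \DD(T)$. Because $A$ and $B$ are extreme points of $T$ that also lie in $T \cap \LL$, both must be vertices of the discrete hull $T^0 = \DCH(T)$, and the segment from $A$ to $B$ is an edge of $T^0$. All remaining vertices of $T^0$ lie on the open boundary chain running from $A$ to $B$ on the $C$-side of $\overline{AB}$, so it suffices to bound the length of this chain by $2\ceil{\log_\phi D}+5$.

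I would argue that this chain splits naturally into two monotone convex sub-chains: one that leaves $A$ and follows the direction close to $\overrightarrow{AC}$, and a symmetric one that leaves $B$ following the direction close to $\overrightarrow{BC}$. The two sub-chains meet, or nearly meet, in the neighborhood of $C$, with at most a constant number of ``bridging'' vertices near $C$ belonging to neither monotone walk; a short case analysis on the position of $C$ relative to its nearest lattice points bounds this overhead by a small constant.

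To bound the sub-chain leaving $A$, I would translate so that $A$ is at the origin and apply a unimodular lattice transformation sending $\overrightarrow{AB}$ to a positive multiple of $(1,0)$. By the geometric interpretation of \gcdGeometric{} (\secref{rational} and \figref{geometric:gcd}), the successive vertices of this sub-chain are exactly the convergents $p_0, p_1, \ldots$ of the direction of $\overrightarrow{AC}$ that remain inside $T$. These convergents satisfy the recurrence $p_i = q_i p_{i-1} + p_{i-2}$ with $q_i \geq 1$ for all $i$, so $\cardin{p_i}$ grows at least as fast as the Fibonacci sequence: $\cardin{p_i} \geq F_{i+1} \geq \phi^{i-1}/\sqrt{5}$. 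Since every convergent that contributes to $\bd T^0$ lies inside $T$ and hence within distance $D$ of the origin, at most $\ceil{\log_\phi D}+1$ such convergents can occur. The symmetric argument bounds the sub-chain leaving $B$ by the same quantity, and summing the two chains, the bridging overhead near $C$, and the two vertices $A$ and $B$ themselves yields the stated bound $2\ceil{\log_\phi D}+7$.

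The main obstacle will be the careful interface between the convergent algorithm and the triangle boundary. The procedure \gcdGeometric{} is stated for a line through the origin; here each sub-chain must be truncated where it first approaches the opposite edge $\overline{BC}$, and care is needed because $C$ need not itself be a lattice point, so the chain may terminate mid ``continued-fraction step'' rather than at a clean convergent. Isolating this boundary behavior, and showing that it costs only a constant number of extra vertices near $C$, is where the explicit additive constant $7$ of the statement gets absorbed.
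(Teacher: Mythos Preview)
The paper's own ``proof'' is not a proof at all: it simply cites the Kahan--Snoeyink algorithm \cite{ks-bcmlp-96} and asserts that a careful inspection of that algorithm yields the stated vertex count. Your proposal is therefore attempting something strictly harder than what the paper does---you are trying to reconstruct, from the continued-fraction machinery of \secref{rational}, the argument that the paper outsources. The two-sub-chain decomposition from $A$ and from $B$ is exactly the structure of the Kahan--Snoeyink procedure, so in spirit you are on the same track.

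There is, however, a genuine gap in your Fibonacci step. From the recurrence $p_i = q_i p_{i-1} + p_{i-2}$ with $q_i \geq 1$ you conclude $\cardin{p_i} \geq F_{i+1}$, and then combine this with $\cardin{p_i} \leq D$ (Euclidean norm, original diameter). But the recurrence does \emph{not} give Fibonacci growth of the Euclidean norm: the triangle inequality goes the wrong way. What it gives is coordinate-wise growth, provided all coordinates are nonnegative---and that holds only in the \emph{transformed} basis you introduced to send $\overrightarrow{AB}$ to the $x$-axis. In that basis the $\ell_\infty$ norm of $p_i$ grows like $F_i$, but a unimodular map can distort Euclidean distances by a factor as large as $\Theta(D)$, so the transformed diameter can be $\Theta(D^2)$. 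Your argument as written thus yields $2\log_\phi D + O(1)$ convergents \emph{per} sub-chain, hence roughly $4\log_\phi D$ in total, missing the stated constant by a factor of two. The fix is to track the $x'$-coordinate (along $\overrightarrow{AB}$) rather than the norm: choose the transformation so that $C$ has $x'$-coordinate in $[0,b)$ where $b$ is the lattice length of $AB$; then every convergent inside $T$ has $x'$-coordinate at most $b \leq \cardin{AB} \leq D$, and $x'(p_i)$ itself is a super-Fibonacci sequence, giving the correct $\log_\phi D$ per chain.

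A smaller imprecision: the vertices of the sub-chain from $A$ are not ``exactly the convergents of $\overrightarrow{AC}$'' but only the convergents on the $B$-side of line $AC$ (the even ones, in the convention of \secref{rational}); the odd ones lie outside the triangle. And near $C$ the chain may include one vertex that is an \emph{intermediate} convergent (a point $q p_{i-1} + p_{i-2}$ with $q < q_i$) rather than a full convergent---this is where your acknowledged ``bridging'' cost enters, and it does need the case analysis you allude to.
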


\begin{proof}
    Kahan and Snoeyink showed in \cite{ks-bcmlp-96} how to compute the
    discrete hull inside such a triangle in $O(\log \DD(T))$ time.  A
    careful inspection of their algorithm reveals that it generates at
    most $\triangleDCHSize$ vertices.
\end{proof}

\begin{lemma}
    Given a polygon $P$ with $n$ edges in the plane, the complexity of
    the discrete hull of $P$ is at most $\pth{ \triangleDCHSize} n$,
    where $\DD(P)$ is the diameter of $P$.

    \lemlab{complexity:hull:polygon}
\end{lemma}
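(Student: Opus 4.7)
The plan is to reduce the complexity of the discrete hull of $P$ to a sum of complexities of $n$ triangles, each having two lattice vertices, so that \lemref{idiotic:above} applies.

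First, since $P^0 = \CH(P \cap \LL) = \CH(\CH(P) \cap \LL)$ and $\CH(P)$ is a convex polygon with at most $n$ vertices, I may assume without loss of generality that $P$ is convex with vertices $v_1,\ldots,v_n$ in counter-clockwise order and diameter $\DD(P)$. The degenerate case $|P\cap\LL|\le 2$ gives $|P^0|\le 2$ and the bound is trivial, so assume $P^0$ is a non-degenerate convex polygon with lattice vertices.

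Next, I would partition the vertices of $P^0$ among the edges of $P$ by outer-normal direction: for each edge $e_i=(v_i,v_{i+1})$ of $P$ with outer normal $\nu_i$, let $V_i \subseteq V(P^0)$ be the vertices of $P^0$ whose outer-normal cone is closest (in angle) to $\nu_i$. Since the normals of $\bd{P^0}$ rotate monotonically, each $V_i$ is a contiguous sub-chain of $\bd{P^0}$, the sub-chains are disjoint, and $|P^0|=\sum_{i=1}^n |V_i|$. Let $a_i, b_i \in \LL$ denote the two endpoints of the sub-chain $V_i$ (these are lattice points shared with the neighbouring sub-chains).

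The heart of the argument is then to associate with each $i$ a triangle $T_i$ having two lattice vertices, diameter at most $\DD(P)$, and containing $V_i$. The natural choice is $T_i = \triangle a_i b_i w_i$, where $w_i$ is a point of $P$ on the $e_i$-side of the segment $\overline{a_i b_i}$ chosen so that $V_i$ lies inside $T_i$; for instance $w_i$ may be taken as $v_i$, as $v_{i+1}$, or as the intersection of $e_i$ with an appropriate supporting line at $a_i$ or $b_i$. Since $a_i,b_i,w_i$ all lie in $P$, we have $\DD(T_i)\le \DD(P)$, and two of the three vertices of $T_i$ are in $\LL$, so \lemref{idiotic:above} gives $|T_i^0| \le \triangleDCHSize$. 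Every vertex $p$ of $V_i$ is an extreme point of $P\cap \LL$, hence also an extreme point of the subset $T_i \cap \LL \subseteq P\cap \LL$ whenever $p \in T_i$, so $p$ is a vertex of $T_i^0$. Therefore
\[
    |P^0| \;=\; \sum_{i=1}^{n} |V_i| \;\le\; \sum_{i=1}^{n} |T_i^0| \;\le\; \pth{\triangleDCHSize}\,n ,
\]
as claimed.

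The main obstacle is the second step: verifying that the sub-chain $V_i$ actually fits inside a triangle whose third vertex is a point of $P$. This requires a careful geometric argument near the shared endpoints $a_i, b_i$ to confirm that the chain is trapped between the segment $\overline{a_i b_i}$ and the edge $e_i$, so that a suitable apex $w_i$ can be chosen to enclose $V_i$ without increasing the diameter beyond $\DD(P)$. Once this is established, the rest of the proof is a straightforward summation.
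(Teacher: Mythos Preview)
Your high-level plan matches the paper's: associate to each of the $n$ features of $P$ a triangle with two lattice vertices, apply \lemref{idiotic:above}, and sum. The difference is in the decomposition. The paper works with the \emph{vertices} of $P$ rather than its edges: for each vertex $p$ of $P$, draw the two rays from $p$ tangent to $P^0$, touching $P^0$ at lattice vertices $v^-,v^+$ (the farthest such points along each ray), and set $T=\triangle p\,v^-v^+$. Then $v^-,v^+\in\LL$ automatically, $\DD(T)\le\DD(P)$ since all three corners lie in $P$, and every vertex of $P^0$ visible from $p$ lies in $T$ and hence is a vertex of $T^0$. Since every vertex of $P^0$ is visible from at least one vertex of $P$, summing over the $n$ vertices of $P$ gives the bound. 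This is a covering rather than a partition, but that suffices for an upper bound.

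The payoff of the paper's choice is precisely that it dissolves your ``main obstacle'': the two lattice corners come for free as the tangent points, and the apex is simply the polygon vertex $p$, so no search for a suitable $w_i$ and no delicate containment argument near the chain endpoints is needed. Your edge-based partition could perhaps be pushed through, but the obstacle you flag is real: neither $v_i$ nor $v_{i+1}$ need lie on the correct side of $\overline{a_ib_i}$, and the intersection of the supporting lines of $P^0$ at $a_i$ and $b_i$ may well fall outside $P$, so the apex has no obvious canonical choice. One minor caution: the identity $\CH(P\cap\LL)=\CH(\CH(P)\cap\LL)$ fails for non-convex $P$ (there may be lattice points in $\CH(P)\setminus P$), so your reduction to the convex case is not valid as written; in the paper's setting $P$ is taken to be convex and the issue does not arise.
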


\begin{proof}
    Let $P^0$ be the discrete hull of $P$, and $p$ a vertex of $P$.
    Let $r^+,r^-$ denote the two rays that emanate from $p$ and are
    tangent to $P^0$, and let $v_p^+, v_p^-$ be the two vertices of
    $P^0$, furthest from $p$, that lie on $r^+, r^-$, respectively.
    Clearly, the vertices of $P^0$ inside $T= \triangle{pv^-v^+}$
    belong to the discrete hull of $T$. By \lemref{idiotic:above}, the
    number of those vertices (including $v^-$ and $v^+$), which, in
    particular, include all vertices of $P^0$ that are visible from
    $p$, is at most $\triangleDCHSize$.

    Applying the above argument to all the vertices of $P$, we have
    that the number of vertices of $P^0$ is bounded by
    $\pth{ \triangleDCHSize }n$, since each vertex of $P^0$ is visible
    from at least one of the vertices of $P$.
\end{proof}

\subsection{The Discrete Hull Algorithm}
\seclab{algorithm:2}

In this section, we present an output sensitive algorithm for
computing the discrete hull.  The main procedure that we present here
receives as input a sufficiently round convex body $C$, and a vertex
$p$ of $C^0$.  It computes the next vertex $p'$ of $C^0$
counterclockwise from $p$.  The full algorithm first computes an
initial (e.g., the lowest) vertex $p_0$ of $C^0$, and then applies
repeatedly the above procedure until all vertices of $C^0$ are
obtained.

We assume a model of computation in which $C$ is given implicitly so
that (a) we are given a lattice point inside $C$, and (b) we can
obtain, in constant time, the intersection points of any query line
with $\bd{C}$, when these points exist. Moreover, we also include the
floor function in in our model of computation. This allows us also to
perform, in constant time, discrete ray shootings, where, for a query
discrete ray $\Bray(u, v)$, we want to find its first point inside or
outside $C$.

Let $T = \triangle{ovw}$ be a right-angle triangle, such that $o$ is
the origin, $v = (x(v), y(v))$ lies in the positive quadrant of the
plane, and $w = (x(v), 0)$ lies on the $x$-axis. The point $u$ of the
(upper) edge $ou$ of the discrete hull of $T$ is a multiple of $u'$,
where $u'$ is the largest (even) convergent of $v$ that still lies
inside $T$. (See \figref{gcd:run} for an illustration: Let $v$ be a
point slightly below $p_4$; then $op_2$ is the upper edge of the
discrete hull.) This observation (or a variant of it) has been used in
calculating parts of the discrete hull in some special cases, where
the body is polygonal or given in an explicit form.  See
\cite{GY86,lc-avcpg-92,ks-bcmlp-96}. In fact, $u$ is calculated in
\cite{ks-bcmlp-96} by performing ray-shootings inside and outside $T$,
an idea that we extend to handle our more general settings.

Unfortunately, these methods fail when the body is either given in an
implicit form or is not polygonal. We present an extension to the
above techniques, where we compute the convergents of the required
edge direction by performing a sequence of ray-shootings on the given
body.

\begin{lemma}
    Let $\rho=(a,b)$ be a point of the lattice $\LL$, such that
    $a, b >0$ and $gcd( a, b ) = 1$. Let
    $\rho_0, \ldots, \rho_n = \rho$ be the sequence of convergents of
    $\rho$.

    Then the segment $s = \rho_i \rho_i'$ must intersect the segment
    $op$, where $o$ denote the origin and
    $\rho_i' = \rho_i + \rho_{i-1}$, for $i=0, \ldots, n$.

    \lemlab{intersect}
\end{lemma}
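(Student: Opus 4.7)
The plan is to work in the lattice basis formed by consecutive convergents and reduce the claim to a one-line arithmetic inequality, which is then established by induction using the geometric definition of the partial quotients from {\gcdGeometric}.

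First, I parametrize the two segments: $s(t) = \rho_i + t \rho_{i-1}$ for $t \in [0,1]$ (since $\rho_i' - \rho_i = \rho_{i-1}$), and the segment $op$ by $\lambda \rho$ for $\lambda \in [0,1]$. Any two consecutive convergents satisfy $|\det(\rho_{i-1}, \rho_i)| = 1$, so the pair $\{\rho_{i-1}, \rho_i\}$ is a $\ZZ$-basis of $\LL$. I write $\rho = \alpha_i \rho_{i-1} + \beta_i \rho_i$ with unique integers $\alpha_i \geq 0$, $\beta_i \geq 1$; non-negativity is exactly the invariant stated in the paragraph preceding \figref{geometric:gcd}, namely that $\rho$ remains a positive combination of the current basis throughout {\gcdGeometric}. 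Substituting into $\lambda \rho = \rho_i + t \rho_{i-1}$ and matching coefficients in this basis immediately yields $\lambda = 1/\beta_i$ and $t = \alpha_i/\beta_i$.

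The intersection point therefore lies inside both $s$ and $op$ precisely when $\alpha_i \leq \beta_i$. I would prove this by induction on $i$. The base case $i = 0$ follows from $q_0 = \floor{b/a}$, giving $\alpha_0 = b - q_0 a < a = \beta_0$. For the inductive step, the identity $\rho_{i-1} = \rho_{i+1} - q_{i+1}\rho_i$ lets me express $\rho$ in the next basis as $\alpha_{i+1} = \beta_i - q_{i+1}\alpha_i$ and $\beta_{i+1} = \alpha_i$. The geometric definition of $q_{i+1}$ from step~(II) of {\gcdGeometric} (the largest integer keeping $q_{i+1}\rho_i + \rho_{i-1}$ on the opposite side of $\ell$ from $\rho_i$) translates, in the basis $\{\rho_{i-1}, \rho_i\}$, into $q_{i+1} = \floor{\beta_i/\alpha_i}$, with equality to $\beta_i/\alpha_i$ only in the terminating case $\rho_{i+1} = \rho$. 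In particular $(q_{i+1}+1)\alpha_i \geq \beta_i$, so $\alpha_{i+1} \leq \alpha_i = \beta_{i+1}$, closing the induction.

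The main obstacle is precisely this translation between the geometric choice of $q_{i+1}$ and its algebraic meaning in the new basis; once $(q_{i+1}+1)\alpha_i \geq \beta_i$ is in hand, the rest is bookkeeping. The only boundary case needing comment is $i = n$, where $\alpha_n = 0$, $\beta_n = 1$, and the intersection degenerates to the shared endpoint $\rho_n = \rho$.
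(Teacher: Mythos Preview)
Your argument is correct and self-contained, but it differs from the paper's proof in style and machinery. The paper gives a two-line geometric argument: since $q_i$ is maximal with $\rho_i = q_i\rho_{i-1}+\rho_{i-2}$ on the same side of the line $\ell$ through $o\rho$ as $\rho_{i-2}$, the point $\rho_i' = (q_i+1)\rho_{i-1}+\rho_{i-2}$ lies on the other side, so $\rho_i\rho_i'$ crosses $\ell$; and the crossing must fall within $o\rho$, since otherwise the next convergent $\rho_{i+1}$ (which lies on the ray from $\rho_{i-1}$ in direction $\rho_i$ and stays on $\rho_{i-1}$'s side of $\ell$) would have to be longer than $\rho$, contradicting the standard fact that convergents never exceed $\rho$ in magnitude.

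Your route instead exploits the unimodularity of $\{\rho_{i-1},\rho_i\}$ to write $\rho=\alpha_i\rho_{i-1}+\beta_i\rho_i$ and compute the intersection parameters explicitly as $\lambda=1/\beta_i$, $t=\alpha_i/\beta_i$, reducing the lemma to the inequality $\alpha_i\le\beta_i$ between Euclidean remainders. This is more computational but also more informative: it identifies the intersection point exactly (at $\rho/\beta_i$) and makes transparent that the pair $(\beta_i,\alpha_i)$ is precisely the pair of remainders carried by the Euclidean algorithm, so the inequality is just the defining property of integer division. The paper's proof is terser and stays in the geometric picture, while yours trades brevity for an explicit coordinate calculation that avoids appealing to the magnitude bound on convergents.
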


\begin{proof}
    Let $l$ be the line passing through $o$ and $\rho$. By definition,
    $\rho_i = q_i \rho_{i-1} + \rho_{i-2}$ and $q_i$ is maximal such
    that $\rho_i$ lies on the same side of $l$ as $\rho_{i-2}$. Thus,
    $\rho_i \rho_i'$ must intersect $l$, or $q_i$ would not be
    maximal.

    Moreover, $\rho_i \rho_i'$ must intersect $o \rho$, or else the
    vector $\rho_{i+1}$ would be longer than $\rho$, which is
    impossible.~
\end{proof}

Let $e'$ be the (currently unknown) vector connecting $p$ with $p'$.
Let $e$ be the reduced vector (i.e. $gcd(x(e), y(e) ) = 1$) having the
same direction as $e'$.  Given $e$, we can easily obtain $p'$ by
calculating the last intersection between $C$ and the discrete ray-$\Bray(p,e)$. Without loss of generality, we assume that both
coordinates of $e$ are positive. This issue is discussed later in more
detail.

In the following, we assume, for simplicity, that $p$ is the origin.
The algorithm, {\findEdgeDirection}, calculates the convergents of
$e$, and thus $e$ itself, by simulating the execution of
{\gcdGeometric} on $e$. The idea is to replace the half-plane
ray-shooting of {\gcdGeometric} by ray shootings on $\bd{C}$.
{\findEdgeDirection} generates several candidates for the vector $e$,
because, unlike {\gcdGeometric}, the vector $e$ is not given
explicitly, and the algorithm does not know when to stop.  The
required vector $e$ is simply the clockwise-most vector out of the
group of generated candidates.

Let $l$ be the line passing through $e$. Denote by $H^{+}$ the open
half-plane lying to the left of $e$, and by $H^-$ the open half-plane
lying to the right of $e$.

The odd convergents of $e$ are the approximations to $e$ from `inside'
$C$ (namely, these convergents lie inside $H^+$ and may lie inside
$C$), and the even convergents are the approximations to $e$ from the
`outside' of $C$ (namely, these convergents do not lie inside $C$, and
lie inside $H^-$, except maybe for the last convergent that may lie on
$l$).

We now describe {\findEdgeDirection} in detail. The algorithm starts with $p_{-2} =(1, 0)$ and $p_{-1} = (0,1)$. In the $i$-th stage the algorithm calculates the intersection between $ray_i = ray( p_{i-2}, p_{i-1} )$ and $C$. By the relation $p_i = q_i p_{i-1} + p_{i-2}$ it follows that the next convergent of $e$ lies on the discrete ray $\Bray_i = \Bray( p_{i-2}, p_{i-1} )$. If $ray_i$ does not intersect $C$ then we stop the procedure, since by \lemref{intersect} we know that all the convergents of $e$ were computed.  Otherwise, we apply \lemref{intersect} to decide what is the next convergent. There are two cases:

\begin{itemize}
    \item {\bf $i$ is odd:} Let $k$ be maximal such that
    $ray_i[k] \cap C \ne \emptyset$. If $k = 0$ then we are done, in
    the sense that all convergents of $e$ have already been computed
    (otherwise, we would have $p_i = p_{i-2}$, which is impossible,
    for $i > 0$, because all the convergents are distinct).  Set
    $p_i \leftarrow p(ray_i, k)$. If $p_i \in C$ then $p_i$ might be
    $e$, and thus it is a candidate for the vector $e$.

    \item {\bf $i$ is even:} Let $k$ be minimal such that
    $ray_i[k] \cap C \ne \emptyset$. If $p(ray_i, k+1) \in C$, then it
    is a candidate for the vector $e$, otherwise $p(ray_i, k)$ is a
    convergent of $e$. If $k=0$ and $i > 0$ then we are
    done. Otherwise, set $p_{i} \leftarrow p(ray_i, k )$.
\end{itemize}

The algorithm returns the clockwise-most vector out of the candidates
generated. See \figref{find:edge:iter}.

\begin{figure}
    \centerline{ \includegraphics{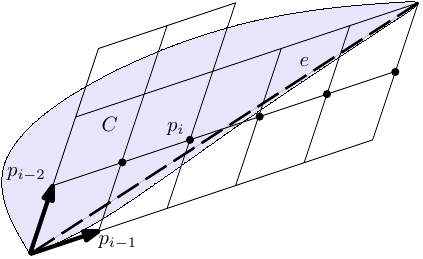} }
    \caption{Illustrating an iteration of {\findEdgeDirection}}
    \figlab{find:edge:iter}
\end{figure}

Let $p_0, \ldots, p_n$ be the convergents of $e$.  The correctness of
{\findEdgeDirection} follows from the fact that it performs exactly
the same ray-shootings as {\gcdGeometric}($e$) performs, up to the
$(n-1)$-th stage, and selects the correct points on these rays to be
the convergents of $e$.  It is easy to verify, that if one of the
results of these $n-1$ ray-shootings of {\findEdgeDirection} differs
from the corresponding $(n-1)$ ray-shootings of {\gcdGeometric}($e$),
then $e$ is not the direction of the next edge of the discrete hull,
because there exists a vector of $\LL$ that lies inside $C$ and is
clockwise to $e$. This implies that $e$ is not an edge of the discrete
hull, which is a contradiction.

As for the running time of the algorithm, we notice that the
coordinates of the sequence of convergents that {\findEdgeDirection}
generates, are a ``super-Fibonacci'' sequence since
$x(p_i) \geq x(p_{i-1}) + x( p_{i-2})$ and
$y(p_i) \geq y(p_{i-1}) + y( p_{i-2})$. Thus
$||p_i|| \geq |F_i| = \floor{ \phi^k/ \sqrt{5} }$, where $F_i$ is the
$i$-th Fibonacci number and $\phi = \frac{\sqrt{5} + 1}{2}$ is the
Golden Ratio (See \cite{clr-ia-90}).

Thus, the $\pth{ 2 + \ceil{\log_{\phi} \DD(C) }}$-th convergent
generated by {\findEdgeDirection} must lie outside $C$, and the ray
created in the next step will not intersect $C$, implying that the
algorithm will have terminated by this iteration. Each iteration of
the algorithm takes $O(1)$ time, and, as just argued, there are at
most $O(\log \DD(C) )$ iterations. Thus, the total running time of
{\findEdgeDirection} is $O( \log( \DD(C) ))$.  However, it might be
that no candidate was generated by {\findEdgeDirection}.  This implies
that our assumption that $x(e), y(e) \geq 0$ is false, and $e$ does
not lie in the positive quadrant of the plane.

Thus, during the calculation of the discrete hull, the algorithm can
either maintain the current quadrant of the plane containing $e$, or
alternatively, perform a search for $e$ in each quadrant of the plane.
Picking $e$ out of the (at most) four candidates generated can be done
in $O(1)$ time, by computing their convex hull $C_H$, where $e$ is the
first edge of $C_H$ encountered when tracing $\bd{C_H}$ in a
counterclockwise direction starting from $o$.

We thus obtain:

\begin{theorem}
    \thmlab{discrete:ch:almost}%
    Given (i) a compact convex shape $C$ in the plane such that the
    intersection between $C$ and a line can be calculated in $O(1)$
    time, and (ii) a vertex $v_0$ of the discrete hull $C^0$ of $C$,
    then $C^0$ can be calculated in $O( |C^0| \log{\DD(C)})$ time and
    $O(1)$ space, where $|C^0|$ denotes the number of vertices of
    $C^0$.
\end{theorem}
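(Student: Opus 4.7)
The plan is to bootstrap the single-edge subroutine \findEdgeDirection{} into a full traversal of $\bd{C^0}$. Starting from the given vertex $v_0$, I would maintain a ``current'' vertex $p$ (initialized to $v_0$) and repeatedly invoke \findEdgeDirection{} to produce the next counterclockwise vertex $p'$ of $C^0$, setting $p \leftarrow p'$, until we cycle back to $v_0$. Since at each step we only need to remember the current vertex, the previous edge direction (to detect termination), and the $O(1)$ candidates produced inside \findEdgeDirection{}, the working space is $O(1)$; we do not store the already-reported vertices, we simply output them one by one.

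The correctness argument has two ingredients already in hand. First, from the discussion preceding the theorem, \findEdgeDirection{} (called on the appropriate quadrant) returns exactly the reduced direction $e$ of the next edge $pp'$ of $C^0$: its sequence of ray-shootings mirrors \gcdGeometric{}$(e)$ stage by stage, and any deviation would witness a lattice point of $C$ strictly clockwise of $e$, contradicting $e$ being an edge direction of $C^0$. Second, given $e$, the next vertex $p'$ is recovered as the last lattice point of $\Bray(p,e)$ still lying inside $C$, which by the model of computation is one discrete ray-shooting, hence $O(1)$. To dispose of the sign assumption $x(e), y(e) \ge 0$, I would, following the remark in the text, run \findEdgeDirection{} in each of the four quadrants (equivalently, reflect $C$ about $p$ accordingly), collect the at-most-four candidate vectors, and pick the clockwise-most one by taking the first edge of the convex hull of the candidates in $O(1)$ time.

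For the running time, each invocation of \findEdgeDirection{} performs a ``super-Fibonacci'' sequence of convergents with $\|p_i\| \ge F_i = \floor{\phi^i/\sqrt{5}}$, so after $2 + \ceil{\log_\phi \DD(C)}$ iterations the current convergent lies outside $C$ and the algorithm halts. Thus one edge is produced in $O(\log \DD(C))$ time, and summing over the $|C^0|$ edges of the discrete hull yields the claimed $O(|C^0|\log \DD(C))$ bound. Termination of the outer loop is detected when the newly generated vertex coincides with $v_0$.

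The only step I expect to require care is the quadrant handling together with the termination condition, since \findEdgeDirection{} was defined under the positive-quadrant assumption and, as noted in the excerpt, may return no candidate in a given quadrant; I would make explicit that running it over all four quadrants and taking the clockwise-most candidate (via the tiny convex hull trick) correctly yields the next edge direction even when $p$ lies on a corner of $C^0$ where the turn spans a quadrant boundary. Everything else is a direct bookkeeping composition of \lemref{intersect} and the analysis of \findEdgeDirection{} already established.
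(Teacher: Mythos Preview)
Your proposal is correct and follows essentially the same argument as the paper: the theorem is stated immediately after the description and analysis of \findEdgeDirection{} as a direct consequence, and your write-up reproduces that same decomposition---repeated calls to \findEdgeDirection{} from the current vertex, correctness via the simulation of \gcdGeometric{} and \lemref{intersect}, the super-Fibonacci bound for the $O(\log\DD(C))$ per-edge cost, and the four-quadrant convex-hull trick for the sign issue. The only minor addition you make explicit is the termination test (returning to $v_0$), which the paper leaves implicit.
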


Although the assumption in \thmref{discrete:ch:almost} on the
availability of a vertex of the discrete hull seems to be
self-defeating, it can be replaced by a more natural condition, as
follows.

\begin{defn}[Lattice Connected Set]
    Let $\G = (\LL, E)$ be the {\em lattice graph}, whose edges
    connect every pair of lattice points at distance 1.  A subset $S$
    of $\LL$ is {\em lattice-connected} if the induced subgraph $\G_S$
    is connected.
\end{defn}

\begin{lemma}
    \lemlab{bounding:box}%
    Given a compact convex body $C$ in the plane, such that
    $C \cap \LL$ is lattice-connected, and a point $v \in C \cap \LL$,
    then a point on the discrete hull of $C$ can be calculated in
    $O(\log{\DD(C)})$ time.
\end{lemma}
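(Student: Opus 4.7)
The plan is to locate the leftmost lattice point of $C$, breaking ties by smallest $y$-coordinate; since such a point is the unique extreme point of $C \cap \LL$ in the direction $(-1,-\varepsilon)$, it is automatically a vertex of $C^0$. The primitive I use is, for any integer $x$, deciding in $O(1)$ time whether the column $X=x$ contains a lattice point of $C$ (and if so returning the lowest one): the line-intersection oracle on $\bd C$ returns the segment $[y_\bot(x), y_\top(x)] = C \cap \{X=x\}$ (or reports that it is empty), and the column contains a lattice point of $C$ iff $\lceil y_\bot(x)\rceil \leq y_\top(x)$, in which case $\bigl(x,\lceil y_\bot(x)\rceil\bigr)$ is the lowest such point. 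The ceiling is available since the floor function is in the model.

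The key structural claim I will establish is that the set $A = \bigl\{x \in \ZZ : \text{column } X=x \text{ contains a lattice point of } C\bigr\}$ is a contiguous integer interval $[x_L, x_R]$, and this is exactly where lattice-connectedness enters. Indeed, if there were $x_1 < x_2 < x_3$ with $x_1, x_3 \in A$ but $x_2 \notin A$, then any lattice path inside $C \cap \LL$ joining a lattice point of column $x_1$ to one of column $x_3$ would, by its unit-step structure, necessarily pass through a lattice point of column $x_2$, contradicting $x_2 \notin A$. This is the one step I expect to demand real care to write cleanly; everything else is routine.

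Given the interval structure of $A$ and the fact that $v \in A$, I perform exponential search leftward from $v$: for $k = 0, 1, 2, \ldots$ I test whether column $x(v)-2^k$ lies in $A$. Since $|A| \leq \DD(C)+1$, this terminates after $O(\log\DD(C))$ tests at some $k$ where the column falls outside $A$. A standard binary search on $[x(v)-2^k,\; x(v)-2^{k-1}]$, exploiting the contiguity of $A$, then pinpoints $x_L$ in $O(\log\DD(C))$ further time, and I return $\bigl(x_L,\lceil y_\bot(x_L)\rceil\bigr)$ as the desired vertex of $C^0$.
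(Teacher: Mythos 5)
Your proposal is correct and is essentially the paper's own argument: the paper performs the same exponential (unbounded) search followed by a binary search, using lattice-connectedness to guarantee that the set of grid lines meeting $C \cap \LL$ is contiguous, and then extracts an extreme lattice point as a hull vertex. The only difference is cosmetic --- you search over vertical columns for the leftmost lattice point, while the paper searches over horizontal rows for the lowest one.
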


\begin{proof}
    The idea is to perform an unbounded search for the lowest vertex
    of the discrete hull of $C$. We intersect $C$ with the horizontal
    lines $l_i$ passing through $v+(0,-2^i)$, for $i=1,2,\ldots$, and
    stop as soon as we encounter a horizontal line, $l_k$, that does
    not intersect $C \cap \LL$.  We then perform a binary search for
    the bottom horizontal line of the grid that intersects $C$,
    between the horizontal line passing through $v$ and $l_k$. The
    lattice-connectivity of $C\cap\LL$ guarantees the correctness of
    this procedure.

    Since $k \leq 1 \ceil{\log \DD( C ) }$, it follows that the binary
    search is performed on an interval of size $O( \DD( C ) )$.

    Given this line, we can calculate a vertex of the discrete hull by
    performing a single ray-shooting.~
\end{proof}

We thus obtain the main result of this paper:

\begin{theorem}
    Given a compact convex shape $C$ in the plane such that the
    intersection between $C$ and any line can be calculated in $O(1)$
    time, such that $C\cap \LL$ is lattice-connected and such that a
    point in $C \cap \LL$ is available, then the discrete hull $C^0$
    of $C$ can be calculated in $O( |C^0| \log{\DD(C)})$ time and
    $O(1)$ space, where $|C^0|$ denotes the number of vertices of
    $C^0$.

    \thmlab{discrete:ch}
\end{theorem}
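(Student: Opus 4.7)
The plan is to combine the two ingredients already developed: \lemref{bounding:box} to bootstrap the process, and \thmref{discrete:ch:almost} to continue it.

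First, I would invoke \lemref{bounding:box}. The hypothesis provides a lattice point $v \in C \cap \LL$ and the hypothesis that $C \cap \LL$ is lattice-connected, which are exactly the assumptions of that lemma. Its application yields, in $O(\log \DD(C))$ time, an initial vertex $v_0$ of the discrete hull $C^0$. This step uses only $O(1)$ working space, since the unbounded search followed by the binary search only needs to remember a constant number of lattice points and parameters.

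Next, with $v_0$ in hand, I would feed it into \thmref{discrete:ch:almost}. That theorem computes $C^0$ in $O(|C^0| \log \DD(C))$ time and $O(1)$ space, provided the intersection of $C$ with any line can be computed in $O(1)$ time, which is assumed here. Summing the two costs gives $O(\log \DD(C)) + O(|C^0| \log \DD(C)) = O(|C^0| \log \DD(C))$, since $|C^0| \geq 1$. Working space remains $O(1)$ because each vertex can be emitted as soon as it is produced by \findEdgeDirection, and only the current (and perhaps previous) vertex of $C^0$ needs to be retained to drive the next iteration.

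There is no genuine obstacle in this final step; the only minor point worth flagging is the $O(1)$-space claim, which relies on the fact that the discrete hull vertices are output in order around $\bd{C^0}$ and none of the intermediate procedures (initial-vertex search, per-edge convergent computation) stores more than a constant number of lattice vectors at any time. With this observation, the theorem follows immediately from the two cited results.
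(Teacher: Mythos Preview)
Your proposal is correct and matches the paper's own argument exactly: the theorem is obtained by first applying \lemref{bounding:box} to produce an initial vertex of $C^0$ in $O(\log \DD(C))$ time, and then invoking \thmref{discrete:ch:almost} to trace out the remaining vertices in $O(|C^0|\log \DD(C))$ time and $O(1)$ space. The paper does not even spell out a separate proof for this theorem, simply writing ``We thus obtain:'' after \lemref{bounding:box}, so your level of detail already exceeds what the paper provides.
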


In particular, we have the following corollary:
\begin{corollary}
    The discrete hull of a disk $D$ of radius $r$ in the plane can be
    calculated in $O(r^{2/3} \log{r} )$ time and $O(1)$ space.
\end{corollary}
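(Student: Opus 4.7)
The plan is to derive this corollary by feeding the disk $D$ as input to \thmref{discrete:ch} and combining the resulting running time bound with the combinatorial estimate of \lemref{dhs:complexity}. Concretely, I would first verify that a disk satisfies each of the three hypotheses of the theorem, and then do the arithmetic.

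For the first hypothesis, the intersection of the boundary of a disk with an arbitrary line reduces to solving a single quadratic equation, so it is trivially computable in $O(1)$ time in the assumed real-RAM model. For the third hypothesis, a point of $D \cap \LL$ is easy to produce: round the center of $D$ to the nearest lattice point; since $r > 1$ this point lies inside $D$ (one can in fact assume $r$ is at least some small constant, otherwise $D^0$ is trivial and the claim is vacuous). The second hypothesis, that $D \cap \LL$ is lattice-connected, is where a short geometric argument is needed: for $r$ bounded away from $0$, any two lattice points $p, q \in D$ can be joined by a monotone lattice path that stays inside $D$, because along any horizontal or vertical segment interior to the disk, every unit step moves to another lattice point that is also interior. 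I would spell this out by walking horizontally first to align $x$-coordinates, then vertically; convexity of the disk keeps every intermediate lattice point inside $D$.

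Having verified the hypotheses, \thmref{discrete:ch} produces $D^0$ in time $O(|D^0| \log \DD(D))$ with $O(1)$ space. The diameter of $D$ is $\DD(D) = 2r$, so $\log \DD(D) = O(\log r)$. By \lemref{dhs:complexity}, the number of vertices of $D^0$ satisfies $|D^0| \le |\LL \cap \bd{\DCH(D)}| = O(r^{2/3})$. Multiplying gives total running time $O(r^{2/3} \log r)$, and the space bound is inherited unchanged.

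The step I expect to require the most care is the lattice-connectivity verification, but it is really only a one-paragraph check rather than a genuine obstacle; the rest is bookkeeping. If one wanted to be fully rigorous, one would also separately handle very small $r$ as an edge case, but for $r$ large enough (say $r \ge \sqrt{2}$) the connectivity argument above goes through without issue.
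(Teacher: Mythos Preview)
Your approach is essentially the paper's own: verify the hypotheses of \thmref{discrete:ch} for a disk (constant-time line intersection, an explicit lattice point inside, lattice-connectivity), then plug in the $O(r^{2/3})$ vertex bound. The paper cites \lemref{dch_complexity} rather than \lemref{dhs:complexity} for that last step, but either suffices since vertices of $D^0$ are among the boundary lattice points.

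One small correction to your lattice-connectivity sketch: the specific rule ``walk horizontally first, then vertically'' can fail. For instance, with $D$ the disk of radius $5$ centered at the origin and $p=(-3,4)$, $q=(4,-3)$, the horizontal-first corner $(4,4)$ lies outside $D$. Convexity alone does not keep an arbitrary L-path inside. The easy fix is to observe that for the two candidate corners $(p_x,q_y)$ and $(q_x,p_y)$ one has
\[
(p_x^2+q_y^2) + (q_x^2+p_y^2) = |p|^2 + |q|^2 \le 2r^2,
\]
so at least one corner lies in $D$; then convexity does place the two legs of \emph{that} L-path inside $D$. The paper, incidentally, simply asserts lattice-connectivity for $r \ge 1/\sqrt{2}$ without argument, so your attempt to justify it is already more than the original provides.
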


\begin{proof}
    In $O(1)$ time we can find a point $v^0 \in D \cap \LL$, by
    finding the closest point in the lattice to the center of $D$.
    Unless $r < 1/\sqrt{2}$, $D\cap \LL$ is not empty and is always
    lattice-connected, so, by \lemref{bounding:box}, a vertex of the
    discrete hull of $D$ can be calculated in $O(\log{r})$ time.  The
    corollary is now an immediate consequence of \thmref{discrete:ch}
    and \lemref{dch_complexity}.
\end{proof}

Given a convex polygon $P$ with $n$ edges in the plane, the algorithm
of Lee and Chang \cite{lc-avcpg-92} computes, in
$O( n +\log{\DD( P ) } )$ time, a vertex of the discrete hull.  All
the algorithms described so far, assume that ray-shooting on $C$ can
be computed in $O(1)$ time. In general, if $T(n)$ is the time for
answering ray-shooting queries on $C$, then the running time increases
by a factor of $T(n)$. For example, we have the following:

\begin{corollary}
    \corlab{poly:dch:alg:simp}%
    Let $P$ be a convex polygon with $n$ edges in the plane. The
    discrete hull $P^0$ of $P$ can be computed in
    $O( n + |P^0|\log{\DD(P)}\log{n} )$ time, where $|P^0|$ denotes
    the number of vertices of $P^0$.
\end{corollary}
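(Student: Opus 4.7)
The plan is to invoke \thmref{discrete:ch:almost} on $P$, accounting for the cost of answering ray-shooting queries on a convex polygon instead of assuming they cost $O(1)$. I will split the work into three pieces: preprocessing $P$, finding an initial vertex of $P^0$, and running the main algorithm.

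First, in an $O(n)$ preprocessing, I split $\bd{P}$ into its upper and lower chains at the leftmost and rightmost vertices; this is immediate since the vertices are already given in convex position around $P$. Any query line $\ell$ meets $\bd{P}$ in at most two points, each of which can be located by binary searching the two chains in $O(\log n)$ time, so line--polygon intersection is answered in $T(n) = O(\log n)$. The same structure handles the discrete ray-shooting operations used inside {\findEdgeDirection}: once $\ell\cap \bd{P}$ is known as a segment $[u_1,u_2]$, the extremal lattice point of a discrete ray $\Bray(u,v)$ supported on $\ell$ that lies inside (or just outside) $P$ is recovered in $O(1)$ using the floor function. Hence each step of the inner algorithm costs $O(\log n)$ instead of $O(1)$.

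Second, to start the algorithm I need a vertex $v_0$ of $P^0$. The Lee--Chang algorithm \cite{lc-avcpg-92} produces such a vertex in $O(n + \log \DD(P))$ time, as quoted in the discussion preceding the corollary, which fits comfortably within the target bound.

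Third, I run the algorithm of \thmref{discrete:ch:almost} verbatim, with every ray-shooting on $C = P$ replaced by the $O(\log n)$-time procedure above. As argued in the analysis of {\findEdgeDirection}, computing one edge of $P^0$ requires only $O(\log \DD(P))$ ray-shootings, thanks to the super-Fibonacci growth of the convergents, so each of the $|P^0|$ output vertices is produced in $O(\log \DD(P)\log n)$ time. Summing the preprocessing, initialization, and main-loop costs yields
\[
    O(n) \;+\; O(n + \log \DD(P)) \;+\; O\!\pth{|P^0|\log \DD(P)\log n}
    \;=\; O\!\pth{n + |P^0|\log \DD(P)\log n},
\]
as claimed. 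The only point that needs genuine verification is that the \emph{discrete} ray-shooting step still costs $O(\log n)$ on a convex polygon; as noted, it reduces to a single $O(\log n)$ line--polygon intersection followed by an $O(1)$ floor computation, so there is no real obstacle and the entire argument is essentially cost bookkeeping on top of \thmref{discrete:ch:almost}.
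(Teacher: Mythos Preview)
Your proposal is correct and follows essentially the same route as the paper: Lee--Chang for the initial vertex, an $O(n)$ preprocessing of $P$ to support $O(\log n)$ ray-shooting, and then \thmref{discrete:ch:almost} with the per-query cost scaled by $\log n$. The only cosmetic difference is that the paper invokes the Dobkin--Kirkpatrick hierarchy for the $O(\log n)$ line--polygon intersection, whereas you use a direct binary search on the boundary chains; both are standard and yield the same bound.
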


\begin{proof}
    Using the algorithm of \cite{lc-avcpg-92}, we compute in
    $O(n + \log{\DD(P)})$ time a vertex $v_0$ of the discrete hull. We
    preprocess $P$ in $O(n)$ time to answer ray-shooting queries, in
    $O(\log{n})$ time, by computing the Dobkin-Kirkpatrick
    hierarchical decomposition of $P$ \cite{dk-ladsc-85}.

    We then compute the discrete hull using the algorithm of
    \thmref{discrete:ch}. Since each ray-shooting takes $O(\log{n})$
    time, the resulting algorithm runs in
    $O(|P^0|\log{\DD(P)}\log{n})$ time.
\end{proof}

\begin{corollary}
    The discrete hull of a convex $n$-gon $P$ can be computed in time \\
    $O(n \log{n} \log^2 \DD(P) )$.
\end{corollary}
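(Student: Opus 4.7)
The plan is to combine the bound from \corref{poly:dch:alg:simp} with the combinatorial bound on $|P^0|$ established in \lemref{complexity:hull:polygon}. The prior corollary gives a running time of $O(n + |P^0| \log \DD(P) \log n)$, so all that remains is to eliminate the output-sensitive factor $|P^0|$ in favor of the input parameters $n$ and $\DD(P)$.

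First I would invoke \lemref{complexity:hull:polygon}, which bounds the number of vertices of the discrete hull of a polygon $P$ with $n$ edges by $\pth{ \triangleDCHSize } n$, where here $T = P$ so $\DD(T) = \DD(P)$. Unpacking the macro, this yields $|P^0| \le \pth{2 \ceil{\log_\phi \DD(P)} + 7} n = O( n \log \DD(P) )$. The use of the Golden Ratio base only affects constants, so $\log_\phi$ can be absorbed into $O(\cdot)$.

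Next I would substitute this bound into the running time from \corref{poly:dch:alg:simp}:
\[
    O\pth{ n + |P^0| \log \DD(P) \log n }
    =
    O\pth{ n + n \log \DD(P) \cdot \log \DD(P) \cdot \log n }
    =
    O\pth{ n \log n \log^2 \DD(P) },
\]
where the additive $n$ term is absorbed. This gives the claimed bound.

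There is no real obstacle; the result is essentially a bookkeeping consequence of the earlier lemmas. The only thing to be careful about is that \lemref{complexity:hull:polygon} applies to an arbitrary polygon $P$ in the plane (not requiring $P$ to be sufficiently round as a convex body) and that ray-shooting on $P$ is supported in $O(\log n)$ time after $O(n)$ preprocessing via the Dobkin--Kirkpatrick hierarchy, exactly as used inside \corref{poly:dch:alg:simp}; both facts are already in place, so the argument goes through directly.
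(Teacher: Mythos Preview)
Your proposal is correct and matches the paper's own proof, which simply states that the corollary ``readily follows from \corref{poly:dch:alg:simp} and \lemref{complexity:hull:polygon}.'' You have spelled out exactly the intended substitution of the $O(n\log \DD(P))$ bound on $|P^0|$ into the running time $O(n + |P^0|\log \DD(P)\log n)$.
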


\begin{proof}
    Readily follows from \corref{poly:dch:alg:simp} and
    \lemref{complexity:hull:polygon}.
\end{proof}

Using \lemref{complexity:hull:polygon}, it is possible to replace the
conditions in \thmref{discrete:ch} by a more natural condition.

\begin{theorem}
    Given a compact convex shape $C$ in the plane and a point
    $p \in C$, such that the intersection between $C$ and any line can
    be calculated in $O(1)$ time, the discrete hull $C^0$ of $C$ can
    be calculated in $O( \log^2{\DD(C)} + |C^0| \log{\DD(C)})$ time
    and $O(1)$ space, where $|C^0|$ denotes the number of vertices of
    $C^0$.

    \thmlab{discrete:ch:ext}
\end{theorem}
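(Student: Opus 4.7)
The plan is to reduce to \thmref{discrete:ch:almost}: given a single vertex of $C^0$, that theorem outputs $C^0$ in $O(|C^0|\log\DD(C))$ additional time. So the task reduces to locating one vertex of $C^0$, or certifying $C\cap\LL=\emptyset$, within $O(\log^2\DD(C))$ time, starting from only the point $p\in C$ and the $O(1)$-time line-intersection oracle.

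First, I would use four ray shootings from $p$ in the cardinal directions to compute the axis-aligned bounding box $B$ of $C$; this is a $4$-vertex convex polygon with $C\subseteq B$ and $\DD(B)=O(\DD(C))$. Next, I would apply \corref{poly:dch:alg:simp} to $B$ to compute $B^0$ in $O(4+|B^0|\log{\DD(B)}\log 4)=O(|B^0|\log\DD(C))$ time; since \lemref{complexity:hull:polygon} bounds $|B^0|=O(\log\DD(C))$, this step costs $O(\log^2\DD(C))$. The key observation is that \emph{any vertex $v$ of $B^0$ that happens to lie in $C$ is automatically a vertex of $C^0$}: such a $v$ is extreme in $B\cap\LL$ in some direction $d$, and since $v\in C\cap\LL\subseteq B\cap\LL$, the same $d$ certifies extremality of $v$ in the subset $C\cap\LL$. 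Scanning the $O(\log\DD(C))$ vertices of $B^0$ and testing each against $C$ via the oracle either yields a starting vertex (feeding into \thmref{discrete:ch:almost}) or leaves us in the degenerate case where no vertex of $B^0$ lies in $C$.

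The hard part will be this degenerate case: $C\cap\LL$ may still be non-empty while all its points are non-extreme within $B\cap\LL$. My plan is to iteratively tighten the enclosing polygon $P\supseteq C$ (initially $P:=B$): at each round, for each vertex of the current $P^0$ lying outside $C$, a single ray shooting on $\bd{C}$ produces a tangent halfplane separating that vertex from $C$; intersecting these halfplanes with $P$ yields a strictly tighter $P'\supseteq C$, whose discrete hull I recompute via \corref{poly:dch:alg:simp}. After $O(\log\DD(C))$ refinement rounds the enclosure is tight enough that either some vertex of the current $P^0$ lies in $C$ (and is then a vertex of $C^0$ by the observation above) or the current $P\cap\LL$ is itself empty (certifying $C\cap\LL=\emptyset$). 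The main technical obstacle is the amortized analysis that keeps the total hull-recomputation cost across all rounds within $O(\log^2\DD(C))$, relying on \lemref{complexity:hull:polygon} to control the growth of $|P^0|$ as halfplanes accumulate and on a charging scheme in which each added halfplane is paid for by the progress it forces toward a vertex of $C^0$.
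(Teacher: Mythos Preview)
Your reduction to \thmref{discrete:ch:almost} is right and matches the paper: everything hinges on locating one vertex of $C^0$ (or certifying $C\cap\LL=\emptyset$) within $O(\log^2\DD(C))$ time. A small correction first: four line queries through $p$ do \emph{not} give the bounding box of $C$; the oracle returns the extent of $C$ on that particular line, not the global $x$- or $y$-extremes. The paper handles this with an unbounded binary search costing $O(\log\DD(C))$ queries, so this is easily fixed. Also note that for an axis-aligned rectangle $B$ the discrete hull $B^0$ is itself a lattice rectangle with exactly four vertices, so your ``easy case'' (some vertex of $B^0$ lies in $C$) almost never happens; what you call the degenerate case is in fact the generic one.

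The genuine gap is in the iterative-tightening step. Two pieces are missing. First, the oracle is line-intersection, not tangent or nearest-point, so ``a single ray shooting on $\bd C$ produces a tangent halfplane separating that vertex from $C$'' is not available; obtaining a separating line needs its own search, and that cost must be accounted for. Second, and more seriously, you give no argument that $O(\log\DD(C))$ refinement rounds suffice, nor that the total size of the intermediate $P^0$'s stays within budget as halfplanes accumulate (by \lemref{complexity:hull:polygon} the bound on $|P^0|$ grows linearly with the edge count of $P$). You acknowledge this yourself as ``the main technical obstacle,'' and without a concrete potential function this is a plan, not a proof.

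The paper sidesteps the iteration entirely. After clipping $C$ to the lattice bounding rectangle $B$ to get $C'=C\cap B$, it takes the chord $e=p_0p_1$ joining two $x$-extreme points of $C'$ on opposite sides of $B$, attaches a vertical trapezoid $\T$ along $e$ so that $C'\cup\T$ is convex and a starting lattice vertex on $\partial\T$ is trivially known, and then runs the algorithm of \thmref{discrete:ch:almost} on $C'\cup\T$ for exactly $m=O(\log\DD(C))$ steps. By \lemref{complexity:hull:polygon} applied to the $4$-gon $\T$, one has $|\DCH(\T)|\le m$; hence either one of the $m$ computed vertices lies in $C'\setminus\inter\T$ (and is then a vertex of $C^0$), or the walk closes up on $\DCH(\T)$ itself, certifying that $C'$ has no lattice point on that side of $e$. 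Repeating once on the other side of $e$ finishes in $O(\log^2\DD(C))$ time with no amortization needed.
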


\begin{proof}
    Let $B(C)$ denote the axis-parallel bounding box of $C$, and let
    $B$ denote the discrete hull of $B(C)$ (which is an axis parallel
    rectangle). Using unbounded search (starting from a lattice point
    near $p$), similar to the one in the proof of
    \lemref{bounding:box}, one can compute $B$, in $O(\log{\DD(C)})$
    time, by performing a sequence of vertical and horizontal
    ray-shootings on $C$.

    Let $C' = C \cap B$. Clearly $\DCH(C) = \DCH(C')$, and one can
    compute the intersection of $C'$ with a line, in $O(1)$ time, by
    computing the corresponding line-intersections with $C$ and $B$.

    Let $p_0, p_1$ be two $x$-extreme points of $C'$, such that $p_0$
    and $p_1$ lie on different sides of $B$, and let $e=p_0p_1$.  If
    $|e| < 100$ then compute the discrete hull of $C$, in $O(1)$ time,
    by performing at most $2 \cdot 100$ vertical ray-shootings.

    Otherwise, let $\T$ be the vertical trapezoid such that $e$ is its
    upper edge, the length of its shorter vertical edges is $\delta$,
    and its bottom edge is horizontal, where $\delta$ is the diameter
    of $B$. See \figref{trapezoid}.

    Using the algorithm of \thmref{discrete:ch:almost}, one can
    compute, in $O(\log^2 \DD(C))$ time, a chain $S$ of $m$ adjacent
    vertices of the discrete hull of $C' \cup \T$, starting from the
    lowest lattice point lying on the left edge of $\T$, where
    $m = 1 + 4 \pth{2 \ceil{\log_\phi (2 \DD(C)) } + 7}$. If
    $C' \setminus \inter \T$ contains a vertex of $C^0$, then $S$ must
    contain such a vertex.  Otherwise, by
    \lemref{complexity:hull:polygon}, we have
    $S = \DCH(\T ) = \DCH( \T \cup C' )$.

    \begin{figure}
        \centerline{ \includegraphics{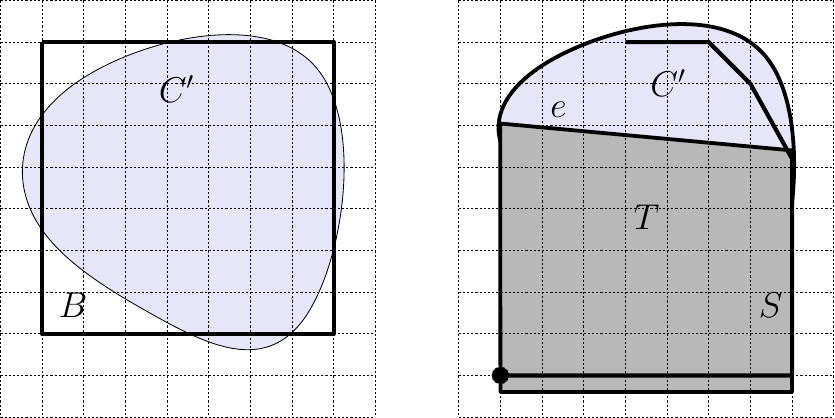} }
        \caption{Computing a vertex of the discrete hull of $C$.}

        \figlab{trapezoid}
    \end{figure}

    If no such vertex was found, apply the above procedure to the
    corresponding symmetric vertical trapezoid having $e$ as its lower
    edge, in order to compute a vertex of $C^0$ below $e$ (if such a
    vertex exists).  Thus, we can compute a vertex of $C^0$, in
    $O( \log^2 {\DD(C)} )$ time, if such a vertex exists.

    Starting from this vertex of the discrete hull, one can compute
    $C^0$, in $O \pth{ |C^0| \log{\DD(C)} }$ time, by the algorithm of
    \thmref{discrete:ch:almost}.
\end{proof}

\section[Experimental Results]{Experimental Results}
\seclab{experimental}

\begin{table}
       \centering
       % flatex input: [figs/t3.tex]
\centering
\begin{tabular}{|r||r|r|r|r|}
\cline{2-5}
\multicolumn{1}{c||}{} & \multicolumn{1}{c|}{Avg. Number of Points} &
 \multicolumn{1}{c|}{Max. Number of} &
 \multicolumn{2}{c|}{Time in Seconds} \\
\cline{1-1}\cline{4-5}$r$ & \multicolumn{1}{c|}{on Discrete Hull} &
 \multicolumn{1}{c|}{Iterations Per Edge}
 & DCH & IG \\
\hline
\hline
10 & 16 & 5 & 0.0015 & 0.0010 \\
100 & 74 & 6 & 0.0097 & 0.0097 \\
1,000 & 345 & 7 & 0.0550 & 0.0890 \\
10,000 & 1,603 & 9 & 0.2935 & 0.8991 \\
100,000 & 7,442 & 11 & 1.5259 & 9.0403 \\
1,000,000 & 34,535 & 13 & 7.9662 & 90.4785 \\
10,000,000 & 160,317 & 14 & 43.0855 & 906.9861 \\
\hline
\end{tabular}

       \caption{Experimental results for the discrete hull of a disk $D$ of radius $r$. The center of the disk was randomly chosen in the unit square, one hundred times for each indicated radius.  }
       \tbllab{results:rt}
   \end{table}

We have implemented the discrete hull algorithm of \secref{algorithm},
for the special case where $C$ is a disk.  The algorithm was
implemented in C++ on a Sun Sparc-10.  The implementation is
straightforward, and consists of about 150 lines of source code.

As a competing, naive algorithm, we implemented a variant of the Graham convex-hull algorithm (See \cite{o-cgc-98}), whose input consists of $O(r)$ lattice points, namely, the two extreme points on every vertical lattice line that intersects $C$. This algorithm generates the points ``on demand'', instead of calculating them in advance (thus using only $O(1)$ storage). This algorithm takes $O(r)$ time to calculate the discrete hull of a disk of radius $r$.

See \tblref{results:rt}, \tblref{results_dh}, and
\tblref{results:dhs} for results from experimenting with the
algorithm, where $IG$ and $DCH$ denote the naive and the new
algorithm, respectively.  As \tblref{results:rt} testifies, the
new algorithm is efficient in practice, and is faster than the `naive'
one, for $r \geq 100$.

As the results of \tblref{results_dh} indicate, the average number of
vertices of the discrete hull of a disk of radius $r$ is about
$3.45r^{2/3}$.  The constant $3.45$ is within the range $0.33...5.54$
proven in \cite{BB91}.  Moreover, Balog also performed experimental
testing for the case of disks centered at the origin, and got similar
results \cite{bar-pc-96}.  Thus, our results suggest that
$\lim_{r\rightarrow \infty} |DH(D_r)| / r^{2/3}$ exists and lies in
the range $3.44...3.46$, where $D_r$ is the disk of radius $r$
centered at the origin. The existence of this limit was posed as an
open question in \cite{BB91}.

We are not aware of any work that gives a rigorous analysis of this
constant, and we leave it as an open problem for further research.
Nevertheless, Balog and Deshoullier proved the existence of
$\lim_{r\rightarrow \infty} |N(r)| / r^{2/3}$, where $N(r)$ is the
average of $|DH(D_r)|$ over a small interval $[r, r+ H]$ for some
$H<1$ \cite{bar-pc-96}.

\begin{table}
    \centering%
    % flatex input: [figs/t1.tex]
\centering
\begin{tabular}{|r||r|r|r||r|r|r||}
\cline{2-7}
\multicolumn{1}{c||}{} &\multicolumn{3}{c||}{Points on $DH$} & \multicolumn{3}{c||}{$|DH|/r^{2/3}$} \\
\hline
$r$ & Min & Max & Avg. & Min & Max & Avg. \\
\hline
\hline
10 & 15 & 20 & 16 & 3.750 & 5.000 & 4.143 \\
100 & 68 & 85 & 74 & 3.238 & 4.048 & 3.555 \\
1,000 & 333 & 360 & 345 & 3.330 & 3.600 & 3.450 \\
10,000 & 1,568 & 1,636 & 1,603 & 3.379 & 3.526 & 3.456 \\
100,000 & 7,355 & 7,521 & 7,442 & 3.415 & 3.492 & 3.455 \\
1,000,000 & 34,358 & 34,720 & 34,535 & 3.436 & 3.472 & 3.454 \\
10,000,000 & 159,879 & 160,674 & 160,317 & 3.445 & 3.462 & 3.454 \\
\hline
\end{tabular}

    \caption{Experimental results on the size of the discrete-hull of a disk of radius $r$. }
    \tbllab{results_dh}
\end{table}

\begin{table}
    \centering%
    % flatex input: [figs/t2.tex]
\centering
\begin{tabular}{|r||r|r|r||r|r|r||}
\cline{2-7}
\multicolumn{1}{c||}{} &\multicolumn{3}{c||}{Points on $DHS$} & \multicolumn{3}{c||}{$|DHS|/r^{2/3}$} \\
\hline
$r$ & Min & Max & Avg. & Min & Max & Avg. \\
\hline
\hline
10 & 32 & 46 & 40 & 8.000 & 11.500 & 10.123 \\
100 & 216 & 251 & 235 & 10.286 & 11.952 & 11.216 \\
1,000 & 1,075 & 1,278 & 1,217 & 10.750 & 12.780 & 12.173 \\
10,000 & 5,697 & 6,260 & 5,986 & 12.278 & 13.491 & 12.902 \\
100,000 & 28,081 & 29,748 & 28,807 & 13.037 & 13.811 & 13.374 \\
1,000,000 & 133,964 & 139,767 & 137,535 & 13.396 & 13.977 & 13.754 \\
10,000,000 & 637,281 & 657,949 & 647,339 & 13.730 & 14.175 & 13.947 \\
\hline
\end{tabular}

    \caption{Experimental results on the number of lattice points on the boundary of the discrete hull of a disk of radius $r$.}
    \tbllab{results:dhs}
\end{table}

As the results of \tblref{results:dhs} indicate, the ratio between the
average number of lattice points on the boundary of the discrete hull
of a disk of radius $r$ and $r^{2/3}$, appears to be monotone
increasing. On the other hand, by \lemref{dhs:complexity} this ratio
is $O(1)$. Thus, we conjecture that this ratio converges to a
constant, as $r$ tends to infinity.

\subsection*{Acknowledgments}

I wish to thank my thesis advisor, Micha Sharir, for his help in preparing this manuscript. I also wish to thank Pankaj Agarwal, Imre B{\'a}r{\'a}ny, Alon Efrat, and G{\"u}nter Rote for helpful discussions concerning this and related problems.  The author also thank Gill Barequet and the referees for their comments and suggestions.

\BibLatexMode{\printbibliography}

\end{document}